\newtheorem{example}{Example}
\newtheorem{remark}{Remark}
\newtheorem{theorem}{Theorem}
\newtheorem{proposition}{Proposition}
\newtheorem{corollary}{Corollary}
\newcommand{\NN}{\mathbb{N}}
\title{Multivariable codes in principal ideal polynomial quotient rings with applications to additive modular bivariate codes over $\mathbb{F}_4$}
 \author[1]{E. Mart\'{\i}nez-Moro\thanks{Partially supported by the Spanish MINECO under grants MTM2015-65764-C3-1-P and MTM2015-69138-REDT.}}
\author[1]{A. Pi\~nera-Nicol\'as\thanks{Partially supported by MINECO-13-MTM2013-45588-C3-1-P.}}
\author[2]{I.F. R\'ua\thanks{
Partially supported by MINECO-13-MTM2013-45588-C3-1-P and Principado de Asturias Grant GRUPIN14-142.}}
\affil[1]{Mathematics Research Institute (IMUVa), Universidad de Valladolid \\
\tt{Edgar.Martinez@uva.es, alejandro.pinera@uva.es} }
\affil[2]{Departamento de Matem\'aticas, Universidad de Oviedo\\
\tt{rua@uniovi.es}}
\begin{document}

\maketitle

\begin{abstract} 

In this work, we study the structure of multivariable modular codes over finite chain rings when the ambient space is a principal ideal ring. We also provide some applications to additive modular codes over the finite field $\mathbb{F}_4$.

\textbf{Keywords:} Principal Ideal Ring; Multivariable Codes; Repeated-root Codes; Quantum Codes.

\textbf{AMS classification:} {11T71, 94B99,  81P70, 13M10}

\end{abstract}

\section{Introduction}

Multivariable codes over a finite field $\mathbb{F}_q$ are a natural generalization of several classes of codes, including cyclic, negacyclic, constacyclic, polycyclic and abelian codes. Since these particular families have also been considered over  finite chain rings  (e.g., over  Galois rings), we proposed in \cite{nuestro1,nuestro2} constructions of multivariable codes over them. As with classical cyclic codes over finite fields, the modular case (i.e., codes with repeated roots) is much more difficult to handle than the semisimple case (i.e., codes with non-repeated roots). In this sense, different authors have studied the properties of cyclic, negacyclic, constacylic and polycyclic modular codes over finite chain rings. Among these codes, those contained in an ambient space which is a principal ideal ring admit a relatively simple description, quite close to that of semisimple codes. This feature has been recently used in the description of abelian codes over a finite field \cite{PIGAS}, and in the description of modular additive cyclic codes over $\mathbb{F}_4$ \cite{Huffman2}. As a natural continuation of these works, in this paper we consider the structure of multivariable modular codes over finite chain rings when the ambient space is a principal ideal ring.

\section{Finite chain rings and codes over them}

An associative, commutative, unital, finite ring $R$ is called \emph{chain ring} if it has a unique maximal ideal $M$ and it is principal (i.e, generated by an element $a$). This condition is equivalent \cite[Proposition 2.1]{CN-FCR} to the fact that the set of ideals of $R$ is the chain (hence its name) $\left\langle 0\right\rangle =\left\langle a^t \right\rangle \subsetneq \left\langle a^{t-1} \right\rangle \subsetneq \dots \subsetneq \left\langle a^1 \right\rangle= M \subsetneq \left\langle a^0 \right\rangle=R
$, where $t$ is the nilpotency index of the generator $a$.
The quotient ring $\overline R=R/M$ is a finite field $\mathbb F_q$ where $q= p^l$ is a prime number power. Examples of finite chain rings include Galois rings $GR(p^n,l)$ of characteristic $p^n$ and $p^{nl}$ elements (here $a=p$, and $t=n$) and, in particular, finite fields ($\mathbb{F}_q=GR(p,l)$) \cite{McD}.

\emph{Multivariable codes} over finite chain rings, i.e., ideals of the ring
$\mathcal R=R[X_1,\dots,X_r]/\left\langle t_1(X_1),\dots,t_r(X_r)\right\rangle
$, where $t_i(X_i)\in R[X_i]$ are monic polynomials, were introduced in \cite{nuestro1,nuestro2}. These codes generalize the notion of multivariable codes over a finite field $\mathbb{F}_q$, as presented in \cite{Poli}, and include well-known families of codes over a finite chain ring alphabet. For instance \emph{cyclic} ($r=1,t_1(X_1)=X_1^{e_1}-1$), \emph{negacyclic} ($r=1,t_1(X_1)=X_1^{e_1}+1$), \emph{constacyclic} ($r=1,t_1(X_1)=X_1^{e_1}+\lambda$), \emph{polycyclic} ($r=1$) and \emph{abelian codes} ($t_i(X_i)=X_i^{e_i}-1,\forall i=1,\dots,r$) \cite{CN-FCR,SergioSteve}. Properties of multivariable codes over a finite chain ring depend on the structure of the ambient ring $\mathcal R$. So, in \cite{nuestro1} a complete account of codes was given when the polynomials $\overline{t_i}(X_i)\in \mathbb{F}_q[X_i]$ have no repeated roots (the so-called \emph{semisimple} or \emph{serial} case). On the other hand, as a first approach to the \emph{repeated-root} (or \emph{modular}) case,  Canonical Generating Systems \cite{CGS} were considered  in \cite{nuestro2}. Unfortunately, the description is not as satisfactory as in the semisimple case. This situation agrees with that of cyclic, negacyclic, constacylic and polycyclic repeated-root codes. Different authors have dedicated their efforts to provide a better understanding of these codes over finite chain rings
(see, for instance \cite{modularcyclic,Blackford,Salagean,SergioSteve}). 
 
 One important feature of semisimple codes is that all of them can be generated by a single codeword, i.e., they can be regarded as principal ideals in $\mathcal R$. This property is not generally true in the modular case, and it partly explains the reason why these codes are more difficult to describe. However, that  of all the ideals in $\mathcal R$ are principal is not equivalent to the semisimple condition. Instead, it is equivalent to the fact that its nilradical is principal \cite[Lemma 3]{Cazaran}. As it was shown in \cite[Theorem 2]{Cazaran}, we have the following characterization (see also \cite[Theorem 5.2]{SergioSteve}, \cite[Theorem 3.2]{Salagean}, \cite[Theorem 1]{nuestro2}).
 
\begin{theorem}\label{PIR}
    The ring $\mathcal R=R[X_1,\dots,X_r]/\left\langle t_1(X_1),\dots,t_r(X_r)\right
    \rangle$ is a \emph{principal ideal ring (PIR)} if and only if one of the following conditions is satisfied:
    \begin{enumerate}
    \item If $R$ is a Galois ring $GR(p^n,l)$, then the number  of
    polynomials for which $\overline t_i(X_i)$ is not
    square-free is at most one. Moreover, if $R$ is not a finite field (i.e., $n>1$), and
    $\overline t_i(X_i)$ is not square-free with
    $$t_i(X_i)=g(X_i)h(X_i)+au(X_i)$$ where $\overline g(X_i)$ is
    the square-free part of $\overline t_i(X_i)$, then
$\overline u(X_i)$ and $\overline h(X_i)$ are coprime polynomials.
	\item If $R$ is not a Galois ring, then $r=1$, and $\overline t_1(X_1)$ is square-free.

     \end{enumerate}
\end{theorem}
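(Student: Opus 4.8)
The route I would take starts from the criterion recalled just above the statement: $\mathcal R$ is a PIR if and only if its nilradical $\mathfrak N$ is principal. The first task is to pin down $\mathfrak N$. Write $M=\langle a\rangle$ for the maximal ideal of $R$, $t$ for its nilpotency index, and for each $i$ factor $\overline{t_i}=\prod_j p_{ij}^{\,e_{ij}}$ into distinct monic irreducibles over $\F_q$; let $s_i=\prod_j p_{ij}$ be the square-free part and $S_i\in R[X_i]$ a monic lift of $s_i$. Since $\overline{t_i}$ divides $s_i^{\,e_i}$ with $e_i=\max_j e_{ij}$, one gets $S_i^{\,e_i}\in a\mathcal R$, so $a$ and the $S_i$ are nilpotent in $\mathcal R$; conversely $\mathcal R/\langle a,S_1,\dots,S_r\rangle\cong\F_q[X_1,\dots,X_r]/\langle s_1,\dots,s_r\rangle$, which is reduced because $\F_q$ is perfect and each $s_i$ is then separable (so this is an \'etale $\F_q$-algebra). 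Hence $\mathfrak N=\langle a,S_1,\dots,S_r\rangle$.

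Next I would turn ``$\mathfrak N$ principal'' into a dimension count. As $\mathcal R$ is a finite, hence Artinian, commutative ring, $\mathfrak N=J(\mathcal R)$ and $\mathcal R/\mathfrak N$ is a finite product of finite fields, so by Nakayama's lemma $\mathfrak N$ is principal precisely when the $\mathcal R/\mathfrak N$-module $\mathfrak N/\mathfrak N^2$ is cyclic; decomposing $\mathcal R\cong\prod_k\mathcal R_k$ into local rings $(\mathcal R_k,\mathfrak m_k)$, this says $\dim_{\mathcal R_k/\mathfrak m_k}\mathfrak m_k/\mathfrak m_k^2\le 1$ for every $k$, i.e.\ that every $\mathcal R_k$ is a chain ring (equivalently, $\mathcal R$ is a finite product of chain rings). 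To make the $\mathcal R_k$ explicit I would use that $R$ is Henselian: $t_i=\prod_j T_{ij}$ with $\overline{T_{ij}}=p_{ij}^{\,e_{ij}}$ and the $\langle T_{ij}\rangle$ pairwise comaximal, so $R[X_i]/\langle t_i\rangle\cong\prod_j R[X_i]/\langle T_{ij}\rangle$; tensoring over $R$ and decomposing into local pieces, the local factors of $\mathcal R$ are governed by tuples $(j_1,\dots,j_r)$. For such a factor, reduction modulo $a$ has cotangent dimension $\#\{i:e_{ij_i}\ge 2\}$, and lifting back to $\mathcal R_k$ adds the class of $a$ — which genuinely raises the dimension unless it already lies in the ideal generated by the $X_i$-uniformizers modulo $\mathfrak m_k^2$.

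It then remains to read off the conditions. If $R=\F_q$ there is no $a$, and cyclicity demands $\#\{i:e_{ij_i}\ge 2\}\le 1$ for all tuples, i.e.\ at most one $\overline{t_i}$ is non-square-free. If $R=\mathrm{GR}(p^n,l)$ with $n>1$: when every $\overline{t_i}$ is square-free we have $S_i=t_i\equiv 0$ and $\mathfrak N=\langle a\rangle$ is principal; when exactly one, say $\overline{t_1}$, is not square-free, the only factors at risk are those with $e_{1j_1}\ge 2$, and there $\dim\mathfrak m_k/\mathfrak m_k^2\le 1$ iff the class of $a$ lies in $\langle p_{1j_1}(\theta_1)\rangle$, which, by feeding $t_1=g(X_1)h(X_1)+a\,u(X_1)$ (with $\overline g=s_1$, $\overline h=\overline{t_1}/s_1$) into the relation $g(\theta_1)h(\theta_1)+a\,u(\theta_1)=0$ at such a factor, is equivalent to $u(\theta_1)$ being a unit there, i.e.\ to $\gcd(\overline u,\overline h)=1$. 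This gives case~1. The remaining case, $R$ not a Galois ring, is the delicate one and where I expect the real difficulty: here $a\neq p$ and $t\ge 2$, and one must show that $r\ge 2$, or a repeated factor of $\overline{t_1}$, unavoidably creates a local factor with two-dimensional cotangent space, so that being a PIR forces $r=1$ and $\overline{t_1}$ square-free — converting this unavoidability into a clean argument is the main obstacle.
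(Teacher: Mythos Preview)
The paper does not prove this theorem: it is quoted from \cite[Theorem~2]{Cazaran} (with parallel references to \cite{SergioSteve,Salagean,nuestro2}), so there is no in-paper argument to compare against. Your overall strategy---compute the nilradical explicitly, pass to the local factors via Hensel lifting, and test whether each cotangent space $\mathfrak m_k/\mathfrak m_k^2$ has dimension at most one---is the natural route and is essentially how the cited sources argue. Your handling of case~1 is terse but complete.

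The gap you flag in case~2 can be closed by a reduction you have all the ingredients for. A finite chain ring $R$ that is not a Galois ring admits a presentation $R\cong S[Y]/\langle \gamma(Y),\,p^{n-1}Y^k\rangle$ with $S=GR(p^n,l)$ a Galois ring and $\gamma$ an Eisenstein polynomial of degree $k\ge 2$ (this is \cite[Theorem~XVII.5]{McD}, used in Remark~\ref{ideales} of the paper). Feeding this into $\mathcal R$ rewrites it as a quotient of $S[Y,X_1,\dots,X_r]$, i.e.\ places you back in case~1 over the Galois ring $S$, with the extra variable $Y$ already contributing a repeated factor $\overline{\gamma}=Y^k$. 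Your cotangent count then forces every $\overline{t_i}$ to be square-free.

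One caution before you try to push further: this reduction shows that $\mathcal R$ is a PIR whenever \emph{all} $\overline{t_i}$ are square-free, with no constraint on $r$. The requirement $r=1$ in the paper's case~2 is stronger than what the nilradical criterion delivers, and in fact fails as written: take $R=\mathbb F_2[Y]/\langle Y^2\rangle$ and $t_i(X_i)=X_i^2+X_i$ for $i=1,2$; then $\mathcal R\cong R^4$ is a PIR with $r=2$. So the obstacle you sensed in case~2 is partly that the target statement is imprecise---check the original formulation in \cite{Cazaran} rather than trying to establish the $r=1$ clause.
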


\begin{example}\label{ejemplo1} Let us consider the ring $R=\mathbb{Z}/4\mathbb{Z}$, which is the Galois ring  $GR(4,1)$, and the polynomials $t_1(X_1)=X_1^2+1$ and $t_2(X_2)=X_2^7-1$. Following Theorem \ref{PIR}, $t_1(X_1)$ can be written as $t_1(X)=(X_1-1)^2+2X_1$. Since $\overline{h}(X_1)=X_1+1$ and $\overline{u}(X_1)=X_1$ are coprime polynomials, then $\mathcal R=R[X_1,X_2]/\langle X_1^2+1,X_2^7-1\rangle$ is a principal ideal ring. Notice that the ring $R[X_1]/\left<X_1^2+1\right>$  is also a principal ideal ring {and its ideals are negacyclic codes.}

%\textcolor{blue}{Espacio ambiente es generalizaci�n del ejemplo 3.2 del articulo de Sergio. Tal vez podamos tirar de las secciones 6  y 7 de Sergio. Self-dual (para los cu�nticos) y relacionar con los PIGAS.}
\end{example}
 
The principal ideal property has been recently used in the description of modular abelian codes over a finite field \cite{PIGAS}, and in the description of modular \emph{additive} cyclic codes over $\mathbb{F}_4$ (i.e., additive subgroups of the ambient ring $\mathbb{F}_4[X_1]/\left\langle X_1^{e_1}-1\right\rangle$, $e_1$ even) \cite{Huffman2}. As a natural continuation of these works, in this paper we consider the structure of multivariable modular codes over finite chain rings when the ambient space is a principal ideal ring. Since this ring is a polynomial quotient ring we will call it a \emph{principal ideal polynomial quotient ring (PIPQR)}.  Our aim is to achieve a complete description of them and their properties.

\section{Multivariable modular codes in PIPQRs}

From now on we will restrict our attention to multivariable codes over a finite chain ring in an ambient space $\mathcal R$ which is a PIR, i.e., to multivariable codes in PIPQRs. We will impose the modular (or repeated-root) condition, as the semisimple case was fully treated in \cite{nuestro1}. Hence, from Theorem \ref{PIR}, $R$ must be a Galois ring $GR(p^n,l)$, and there exists exactly one index $i=1,\dots,r$ such that $\overline t_i(X_i)$ has repeated-roots. We shall assume w.l.o.g. that $i=1$. Let $\overline t_1(X_1)=\prod_{j=1}^s\overline{g_j}(X_1)^{k_j}$ be the unique decomposition in powers of coprime irreducible polynomials $\overline{g_j}(X_1)\in\mathbb{F}_q[X_1]$ of degree $r_j$. Then, because of Hensel's lemma \cite[Theorem XIII.4]{McD}, there exist monic $G_j(X_1)\in R[X_1]$ pair-wise coprime polynomials such that $t_1(X_1)=\prod_{j=1}^s G_j(X_1)$, and $G_j(X_1)=g_j(X_1)^{k_j}+pu_j(X_1)$ (i.e. $t_1(X_1)$ is decomposed as a product of primary coprime polynomials).
As in \cite[Section 5]{SergioSteve} we may assume w.l.o.g. that $g_j(X_1)$ is monic and $r_jk_j>\deg{u_j(X_1)}$.

Hence, $g(X_1)=\prod_{j=1}^s g_j(X_1)$ is such that $\overline{g}(X_1)$ is the square-free part of $t_1(X_1)=g(X_1)h(X_1)+pu(X_1)$, where $h(X_1)=\prod_{j=1}^s g_j(X_1)^{k_j-1}$, and $u(X_1)=\sum_{j=1}^s u_j(X_1)\prod_{l\not= j}g_l(X_1)^{k_l}+p\Delta(X_1)$, for some $\Delta(X_1)\in R[X_1]$. If $R$ is not a finite field, then the principal condition is equivalent to $\overline{u}(X_1)=\sum_{j=1}^s \overline{u_j}(X_1)\prod_{l\not= j}\overline{g_l}(X_1)^{k_l}$ nonzero and coprime with 
$\overline{h}(X_1)=\prod_{j=1}^s \overline{g_j}(X_1)^{k_j-1}$.  This means that $\overline{g_j}\ \not|\ \overline{u_j}$, whenever $k_j\ge 2$.

A first question that can be asked is whether it is possible to obtain analogues of \emph{abelian codes in principal ideal group algebras} \cite{PIGAS} in this setting. The answer is not, as the following corollary of Theorem \ref{PIR} shows.

\begin{corollary}\label{noabeliano}
    If $\mathcal R=R[X_1,\dots,X_r]/\left\langle X_1^{e_1}-1,\dots,X_r^{e_r}-1\right
    \rangle$ is a {principal ideal polynomial quotient ring (PIPQR)}, then either $R$ is a finite field $\mathbb{F}_q$ (i.e., $\mathcal R$ is a principal ideal group algebra, PIGA) or we are in the semisimple case.
\end{corollary}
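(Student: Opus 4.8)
The plan is to argue by contradiction, using Theorem~\ref{PIR}. Assume $\mathcal R=R[X_1,\dots,X_r]/\langle X_1^{e_1}-1,\dots,X_r^{e_r}-1\rangle$ is a PIPQR. If $R$ is a finite field there is nothing to prove (then $\mathcal R$ is, by definition, a principal ideal group algebra), so suppose $R=GR(p^n,l)$ with $n>1$; we must show that every $\overline{t_i}(X_i)=X_i^{e_i}-1$ is square-free over $\mathbb{F}_q$, i.e.\ that we are in the semisimple case. Suppose not. By Theorem~\ref{PIR}(1) at most one of them fails to be square-free, hence exactly one does; relabel so that it is $\overline{t_1}(X_1)=X_1^{e_1}-1$. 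Since $\mathbb{F}_q$ has characteristic $p$, the polynomial $X_1^{e_1}-1$ is square-free iff $p\nmid e_1$; thus $e_1=p^a m$ with $a\ge1$ and $p\nmid m$, and in $\mathbb{F}_q[X_1]$ we have $\overline{t_1}(X_1)=(X_1^m-1)^{p^a}$, whose square-free part is exactly $\overline g(X_1)=X_1^m-1$ because $X_1^m-1$ is separable.

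I would then exhibit the decomposition $t_1(X_1)=g(X_1)h(X_1)+p\,u(X_1)$ of Theorem~\ref{PIR}(1) explicitly, taking the obvious lifts $g(X_1)=X_1^m-1$ and $h(X_1)=(X_1^m-1)^{p^a-1}$ in $R[X_1]$ (so that $\overline g$ is the square-free part of $\overline{t_1}$ and $\overline h=(X_1^m-1)^{p^a-1}$, exactly as in the discussion preceding the corollary). Since $gh=(X_1^m-1)^{p^a}$, we obtain
\[
p\,u(X_1)\;=\;t_1(X_1)-g(X_1)h(X_1)\;=\;(X_1^{p^a m}-1)-(X_1^m-1)^{p^a}.
\]
The right-hand side is a polynomial in $Y:=X_1^m$, namely $P(Y):=(Y^{p^a}-1)-(Y-1)^{p^a}\in\mathbb Z[Y]$, and the Frobenius congruence $(Y-1)^{p^a}\equiv Y^{p^a}-1\pmod p$ shows that all coefficients of $P$ are divisible by $p$. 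Hence $P=p\,Q$ with $Q\in\mathbb Z[Y]$, and one may take $\overline u(X_1)=\overline{Q}(X_1^m)$ in $\mathbb{F}_q[X_1]$.

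The key point is that $P(1)=(1-1)-(1-1)^{p^a}=0$, so $Q(1)=P(1)/p=0$; therefore $(Y-1)\mid Q(Y)$ in $\mathbb Z[Y]$, and consequently $(X_1^m-1)\mid \overline{Q}(X_1^m)=\overline u(X_1)$ in $\mathbb{F}_q[X_1]$. Since $a\ge1$ we also have $(X_1^m-1)\mid (X_1^m-1)^{p^a-1}=\overline h(X_1)$, so $\overline u$ and $\overline h$ share the nontrivial common factor $X_1^m-1$ and are not coprime. This contradicts the condition in Theorem~\ref{PIR}(1) that must hold when $R$ is not a finite field, which finishes the proof.

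I do not expect a real obstacle here, since the heart of the matter is the one-line identity for $p\,u(X_1)$ together with $P(1)=0$. The only point deserving a word of care is that the conclusion should not depend on the chosen lifts of $g$ and $h$: writing any other admissible pair as $g+pv,\ h+pw$ changes $\overline u$ only by $-\overline v\,\overline h-\overline g\,\overline w$, and every irreducible factor of $X_1^m-1$ divides both $\overline g$ and $\overline h$ (the latter because $p^a-1\ge1$), so the divisibility $(X_1^m-1)\mid\overline u$ is intrinsic to $\overline{t_1}$. Equivalently, running the same computation through the primary Hensel factors $G_j=g_j^{p^a}+pu_j$ of $t_1$ shows $\overline{g_j}\mid\overline{u_j}$ for every $j$, which is precisely the failure of the principal ideal property noted in the paragraph preceding the corollary.
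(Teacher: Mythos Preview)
Your proof is correct and follows the same strategy as the paper: contradict the coprimality condition in Theorem~\ref{PIR}(1) for a specific decomposition $t_1=gh+pu$ with $\overline g=X_1^m-1$ the square-free part. The only difference is the choice of lift of $h$: the paper takes $h(X_1)=1+X_1^{m}+\cdots+X_1^{(p^{a}-1)m}$, so that $g(X_1)h(X_1)=X_1^{e_1}-1=t_1(X_1)$ already in $R[X_1]$ and hence $u=0$, which makes the non-coprimality of $\overline u$ and $\overline h$ immediate and avoids your computation with $P(Y)$ and $Q(1)=0$.
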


\begin{proof}
If we are not in the semisimple case, then $R$ is a Galois ring, $\overline t_1(X_1)=X_1^{e_1}-1$ has repeated-roots, and so $e_1=p^{l_1}m_1$, with $l_1\ge 1$ and $p_1\ \not|\ m_1$. Besides, if $R$ is not a finite field, then $t_1(X_1)=g(X_1)h(X_1)+pu(X_1)$, with $g(X_1)=X_1^{m_1}-1,h(X_1)=1+X_1^{m_1}+\dots+X_1^{(p^{l_1}-1)m_1},u(X_1)=0$. But this is a contradiction with Theorem \ref{PIR}, because $\overline{g}(X_1)$ is the square-free part of $\overline t_1(X_1)$, and $\overline h(X_1),\overline u(X_1)$ are not coprime polynomials.
\end{proof}

As the abelian case is fully treated in \cite{PIGAS} we will impose the condition that $\mathcal R$ is not a group ring henceforth. Even though we cannot have abelian codes in our setting it is interesting to mention how abelian codes in PIGAs are viewed in \cite{PIGAS}. In such a paper the ring $\mathcal R$ is a group algebra $\mathbb{F}_q[G]$ over an abelian finite group $G$ which is a direct product of a cyclic $p-$Sylow $B$ and a complementary subgroup $A$. A one-to-one correspondence between the ring $\mathbb{F}_q[G]$ and the group ring $\mathcal S[B]$ (where $S=\mathbb{F}_q[A]$ is a semisimple group ring) is used to describe all the codes (i.e., ideals) in the former ring (see sections II and III in \cite{PIGAS}). We want to use the same type of approach in our case: adjoin the semisimple part to the base ring and use its decomposition as sum of finite chain rings \cite{nuestro1} to describe the original PIPQR. Let us begin this technique with the univariable case, i.e., with the description of polycyclic codes. 

\begin{proposition}\label{univariado}
Let $\mathcal R=R[X_1]/\left\langle t_1(X_1)\right\rangle$ be a PIPQR such that $\overline{t_1}(X_1)\in \mathbb{F}_q[X_1]$ has repeated-roots (in particular, $R$ is a Galois ring $GR(p^n,l)$). Then, $\mathcal R$ is a {direct sum} of finite chain rings $\mathcal R_j$. {For each of these rings, the maximal ideal  has nilpotency index $nk_j$ and the residual field $\overline{\mathcal R_j}$ is the finite field $\mathbb{F}_{q^{r_j}}$ ($q=p^l$).}
\end{proposition}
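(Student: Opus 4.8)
The plan is to split $\mathcal{R}$ along the primary decomposition $t_1(X_1)=\prod_{j=1}^s G_j(X_1)$ recalled above and then verify that each factor is a finite chain ring with the stated invariants. Since the polynomials $G_j(X_1)$ are pairwise coprime in $R[X_1]$, the Chinese Remainder Theorem gives a ring isomorphism $\mathcal{R}\cong\bigoplus_{j=1}^s\mathcal{R}_j$, where $\mathcal{R}_j=R[X_1]/\langle G_j(X_1)\rangle$. Each $\mathcal{R}_j$ is the image of $\mathcal{R}$ under the $j$-th coordinate projection, hence a homomorphic image of $\mathcal{R}$; since $\mathcal{R}$ is a PIR, so is $\mathcal{R}_j$.

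Next I would show that each $\mathcal{R}_j$ is local. Reducing modulo $p$ gives $\mathcal{R}_j/p\mathcal{R}_j\cong\mathbb{F}_q[X_1]/\langle\overline{g_j}(X_1)^{k_j}\rangle$, which is local with maximal ideal $\langle\overline{g_j}(X_1)\rangle$ because $\overline{g_j}$ is irreducible; since $p$ is nilpotent in $\mathcal{R}_j$ (as $p^n=0$ in $R$), $\mathcal{R}_j$ is itself local, with maximal ideal $M_j=\langle p,g_j(X_1)\rangle$ and residue field $\overline{\mathcal{R}_j}=\mathcal{R}_j/M_j\cong\mathbb{F}_q[X_1]/\langle\overline{g_j}(X_1)\rangle\cong\mathbb{F}_{q^{r_j}}$. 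A finite local ring all of whose ideals are principal is a chain ring: if $M_j=\langle\pi\rangle$ then $M_j$ is nilpotent and every element equals $\pi^i$ times a unit for some $i$, so the ideals form the chain $\mathcal{R}_j\supsetneq M_j\supsetneq M_j^2\supsetneq\cdots$ (this is \cite[Proposition 2.1]{CN-FCR}). Hence each $\mathcal{R}_j$ is a finite chain ring, which gives the direct-sum assertion.

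For the nilpotency index I would compare cardinalities. For a finite chain ring $S$ with residue field $\overline S$, maximal ideal $M$ and nilpotency index $t$, each quotient $M^i/M^{i+1}$ with $0\le i<t$ is a one-dimensional $\overline S$-vector space, so $|S|=|\overline S|^{t}$. Here $G_j(X_1)=g_j(X_1)^{k_j}+pu_j(X_1)$ is monic of degree $r_jk_j$ (recall that $g_j$ is monic of degree $r_j$ and $\deg u_j<r_jk_j$), so $|\mathcal{R}_j|=|R|^{r_jk_j}=q^{nr_jk_j}$, whereas $|\overline{\mathcal{R}_j}|=q^{r_j}$; comparing, $q^{nr_jk_j}=(q^{r_j})^{t_j}$, whence the nilpotency index of $M_j$ is $t_j=nk_j$, as claimed. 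The only genuinely non-automatic step is the chain property, and this is precisely where the hypothesis that $\mathcal{R}$ is a PIR (rather than merely a PIPQR) is needed, since a finite local ring such as $\mathcal{R}_j$ need not be a chain ring in general (its maximal ideal $\langle p,g_j\rangle$ may fail to be principal). Everything else is bookkeeping, the only care being the degree count $\deg G_j=r_jk_j$ and the cardinality formula for finite chain rings.
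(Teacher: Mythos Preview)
Your proof is correct and follows essentially the same route as the paper's: decompose $\mathcal R$ along the primary factorization $t_1=\prod G_j$, show each factor $\mathcal R_j\cong R[X_1]/\langle G_j\rangle$ is a finite local PIR (hence a chain ring) with residue field $\mathbb F_{q^{r_j}}$, and compute the nilpotency index by comparing $|\mathcal R_j|=q^{nr_jk_j}$ with $|\overline{\mathcal R_j}|^{t_j}$. The only difference is presentational: where the paper invokes \cite[Theorem 5.1]{SergioSteve}, \cite[Theorem 3.2]{Salagean} and \cite[Lemma XVII.4]{McD} for these steps, you supply the direct arguments (CRT, the nilpotent-lifting argument for locality, and the filtration count $|S|=|\overline S|^t$). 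One small slip: your parenthetical ``a PIR (rather than merely a PIPQR)'' is confused, since in this paper PIPQR \emph{means} a polynomial quotient ring that is a PIR; the hypothesis you are actually using is simply that $\mathcal R$ is a PIR.
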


\begin{proof}
%Notice first that, because of Corollary \ref{noabeliano}, if $R\not=\mathbb{F}_q$, then $t_1(X_1)\not=X_1^e-1$ for all $e\in \NN$. 
We will follow \cite{McD} through \cite{SergioSteve} (ambient structure of polycyclic codes). With the previous notation $\mathcal R= \bigoplus_{j=1}^s\mathcal R_j$, where 
$\mathcal R_j=\left\langle\widetilde{G_j}(X_1)\right \rangle\cong R[X_1]/\left\langle{G_j}(X_1)\right \rangle$ and $\widetilde{G_j}(X_1)=\prod_{l\not =j}G_l(X_1)$
 \cite[Theorem 5.1]{SergioSteve}. Now, with the same argument of \cite[Theorem 3.2]{Salagean} we can say that $\mathcal R_j$ is local PIR with maximal ideal $\left\langle p\widetilde{G_j}(X_1)\ ,\ {g_j}(X_1)\widetilde{G_j}(X_1)\right\rangle=\left\langle {g_j}(X_1)\widetilde{G_j}(X_1)\right\rangle$. 
 %By hypothesis $\mathcal R$ is a PIR, and so it is $\mathcal R_j$ too. Notice that if $v$ is a generator of the nilradical of $\mathcal R$, then $v_j=v\widetilde{
%G_j}(X_1)$ is a generator of the nilradical of $\mathcal R_j$. 
Therefore $\mathcal R_j$ is a finite chain ring and
 $\overline{\mathcal R_j}\cong \mathbb{F}_q[X_1]/\left\langle\overline{g_j}(X_1)\right\rangle$ is a field extension of $\mathbb{F}_q$ of degree $r_j$. Because of the proof of
 \cite[Lemma XVII.4]{McD}, $|\mathcal R_j|=\left(q^{r_j}\right)^{w_j}$, where $w_j$ is the nilpotency index of ${g_j}(X_1)\widetilde{G_j}(X_1)$. But since $G_j(X_1)$ is a monic polynomial of degree $k_jr_j$, we have that $|\mathcal R_j|=\left(q^{n}\right)^{k_jr_j}$, and so $w_j=nk_j$.
 
% Because the nilpotency index of $p\widetilde{G_j}(X_1)$ is $n$, and because 
% $\left\langle g_j(X_1)^{k_j}\widetilde{G_j}(X_1)\right\rangle= \left\langle p\widetilde{G_j}(X_1)\right\rangle$ \cite[Theorem 3.2]{Salagean}, we get $(n-1)k_j<w_j\le nk_j$, and if $k_j=1$, then $w_j=n$ .
\end{proof}

%\begin{remark}Because of \cite[Lemma XVII.4]{McD} we know that the nilpotency index $w_j$ satisfies the bounds $(n-1)s_j+1\le w_j\le ns_j$, where $p^n$ is the characteristic of $R$, and $s_j$ is the smallest natural number such that $p\in \left\langle v_j^{s_j}\right\rangle$. Hence, $s_j\le\frac{(k_j+1)t-2}{n-1}$.
%\end{remark}

%\textcolor{red}{No tengo muy claro que la nota sea muy informativa ni tenga interes pero es por conectarlo con lo que comentamos hoy por la mangana.}

\begin{remark}\label{ideales} In view of \cite[Theorem XVII.5]{McD} we have the following description of the ambient space ring $\mathcal R$ as a direct sum of finite chain rings (cf. \cite[Equation (II.5)]{PIGAS}):
$$\mathcal R\cong \bigoplus_{j=1}^s R_j[X_1]/\left\langle \gamma_j(X_1)\ ,\ p^{n-1}X_1^{k_j}\right\rangle$$
where $R_j=GR(p^n,lr_j)$,and $\gamma_j(X_1)\in R_j[X_1]$ is an Eisenstein polynomial of degree $k_j$ of the form $X_1^{k_j}+pf_j(X_1)$. Moreover, for each factor the set of nonzero ideals is 
$\left\{\left\langle p^iX_1^l \right\rangle\ |\ 0\le i\le n-1\ ,\ 0\le l\le k_j-1 \right\}$
\end{remark}

Now, let us describe all possible univariable codes (cf. \cite[Corollaries 3.11, 3.12]{nuestro1}).

%So, if $\mathcal K$ is a multivariable code in ... it is uniquely be generated by... (hacker referencia a nuestros resultados del case modular).

\begin{corollary}\label{cor:des-ideales}  {If
$\mathcal R=R[X_1]/\left\langle t_1(X_1)\right\rangle$ is a PIPQR such that $\overline{t_1}(X_1)\in \mathbb{F}_q[X_1]$ has repeated-roots (in particular, $R$ is a Galois ring $GR(p^n,l)$),
then any code $\mathcal K\vartriangleleft
\mathcal R$ is a  sum of ideals of the form 
$$
\left\langle p^{i_j}g_j(X_1)^{c_j}\widetilde{G_j}(X_1)\right\rangle,$$
where $(i_j,c_j)=(n,0)$ or $ 0\le i_j\le n-1\ ,\
0\le c_j\le k_j-1\ ,\
1\le j\le s.
$
Hence, there exists a family of
 polynomials  $H_1,\dots,H_{n}\in R[X_1]$
 such that
 \begin{equation}
\mathcal K=\left\langle  H_1,p H_2,\dots,p^{n-1}
 H_{n}\right\rangle=\left\langle
 \sum_{i=0}^{n-1}p^iH_{i+1}\right\rangle
 \end{equation}

Moreover, $$|\mathcal K|=|\bar R|^{\sum_{j=1}^{s}r_j(nk_j-c_j-i_jk_j)} $$
and there exist $\prod_{j=1}^s(nk_j+1)$ repeated-root codes in $\mathcal R$.}

\end{corollary}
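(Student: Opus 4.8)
The plan is to reduce everything to the structure of the direct summands $\mathcal R_j$ provided by Proposition~\ref{univariado} and Remark~\ref{ideales}. First I would use the standard fact that an ideal $\mathcal K\vartriangleleft\mathcal R$ of a finite direct sum of rings splits as $\mathcal K=\bigoplus_{j=1}^{s}\mathcal K_j$ with $\mathcal K_j=\mathcal K\cap\mathcal R_j$ an ideal of $\mathcal R_j$. Inside $\mathcal R$ the summand is $\mathcal R_j=\langle\widetilde{G_j}(X_1)\rangle$, where $\widetilde{G_j}=\prod_{l\neq j}G_l$ is a unit in $\mathcal R_j\cong R[X_1]/\langle G_j(X_1)\rangle$ (pairwise coprimality of the $G_l$) and is zero on the other summands; so it suffices to describe the ideals of each $\mathcal R_j$ and transport them back to $\mathcal R$ via multiplication by $\widetilde{G_j}$.

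Next I would list the ideals of $\mathcal R_j\cong R[X_1]/\langle G_j(X_1)\rangle$. It is a finite chain ring with maximal ideal $\langle g_j(X_1)\rangle$ of nilpotency index $nk_j$, so its nonzero ideals form the chain $\langle g_j(X_1)^{m}\rangle$, $0\le m\le nk_j-1$. The key point to check is that under the PIR hypothesis one has $g_j\nmid u_j$ whenever $k_j\ge 2$, hence $u_j$ is a unit in the local ring $\mathcal R_j$, and the identity $g_j(X_1)^{k_j}=-p\,u_j(X_1)$ coming from $G_j=g_j^{k_j}+pu_j$ then gives $\langle g_j(X_1)^{k_j}\rangle=\langle p\rangle$ in $\mathcal R_j$. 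Writing uniquely $m=i_jk_j+c_j$ with $0\le i_j\le n-1$, $0\le c_j\le k_j-1$, each nonzero ideal of $\mathcal R_j$ equals $\langle p^{i_j}g_j(X_1)^{c_j}\rangle$, and back in $\mathcal R$ it equals $\langle p^{i_j}g_j(X_1)^{c_j}\widetilde{G_j}(X_1)\rangle$; the zero ideal is recorded by the value $(i_j,c_j)=(n,0)$. Summing over $j$ yields $\mathcal K=\sum_{j=1}^{s}\langle p^{i_j}g_j(X_1)^{c_j}\widetilde{G_j}(X_1)\rangle$, the first assertion. This step is also where one matches the two presentations, identifying $X_1$ in the Eisenstein model of Remark~\ref{ideales} with $g_j(X_1)$.

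For the generating family I would regroup these generators by the power of $p$: put $H_{i+1}(X_1)=\sum_{j:\,i_j=i}g_j(X_1)^{c_j}\widetilde{G_j}(X_1)$ for $0\le i\le n-1$. The orthogonality $\widetilde{G_j}\widetilde{G_{j'}}=0$ in $\mathcal R$ for $j\neq j'$ (since $t_1=\prod_l G_l$ divides the product) together with the fact that $\widetilde{G_j}$ is a unit in $\mathcal R_j$ gives $\langle p^{i}H_{i+1}\rangle=\sum_{j:\,i_j=i}\langle p^{i}g_j^{c_j}\widetilde{G_j}\rangle$, whence $\mathcal K=\langle H_1,pH_2,\dots,p^{n-1}H_n\rangle$. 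As $\mathcal R$ is a PIR this ideal is principal, and the same computation applied to $H=\sum_{i=0}^{n-1}p^{i}H_{i+1}=\sum_{j=1}^{s}p^{i_j}g_j^{c_j}\widetilde{G_j}$ (the summands with $i_j=n$ vanish) shows $\langle H\rangle\supseteq\langle p^{i_j}g_j^{c_j}\widetilde{G_j}\rangle$ for every $j$, so $\langle H\rangle=\mathcal K$.

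Finally I would compute cardinalities with $|\mathcal K|=\prod_{j=1}^{s}|\mathcal K_j|$, using the chain-ring identity $|\langle g_j^{m}\rangle|=(q^{r_j})^{nk_j-m}$ in $\mathcal R_j$ (each quotient $\langle g_j^{i}\rangle/\langle g_j^{i+1}\rangle$ is one-dimensional over the residue field $\mathbb{F}_{q^{r_j}}$); substituting $m=i_jk_j+c_j$ and $|\bar R|=q$ gives $|\mathcal K|=|\bar R|^{\sum_{j=1}^{s}r_j(nk_j-c_j-i_jk_j)}$, which correctly returns a factor $1$ on the zero summands. The number of codes is $\prod_{j=1}^{s}(\text{number of ideals of }\mathcal R_j)=\prod_{j=1}^{s}(nk_j+1)$, since a chain ring whose maximal ideal has nilpotency index $nk_j$ has exactly $nk_j+1$ ideals. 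I expect the only genuinely delicate point to be the identity $\langle g_j(X_1)^{k_j}\rangle=\langle p\rangle$ in $\mathcal R_j$ and the matching of the two presentations of $\mathcal R_j$; the direct-sum splitting, the orthogonality relations, and the counting are then routine.
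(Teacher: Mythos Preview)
Your proposal is correct and follows exactly the route the paper intends: the paper states this result without proof, as an immediate corollary of Proposition~\ref{univariado} and Remark~\ref{ideales} (together with a cross-reference to the analogous semisimple results), and you have simply written out those details---the CRT splitting $\mathcal K=\bigoplus_j\mathcal K_j$, the chain of ideals in each $\mathcal R_j$, the regrouping into $H_1,\dots,H_n$, and the counting. One small point worth tightening: your claim that the maximal ideal of $\mathcal R_j$ is $\langle g_j(X_1)\rangle$ uses $\overline{g_j}\nmid\overline{u_j}$, which the PIR hypothesis only guarantees when $k_j\ge 2$; for $k_j=1$ the ring $\mathcal R_j$ is the Galois ring $GR(p^n,lr_j)$ with maximal ideal $\langle p\rangle$, but since then $c_j=0$ is forced, the stated generators $p^{i_j}\widetilde{G_j}$ are still correct and nothing else in your argument changes.
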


\begin{example}\label{ejemplo2} (Example \ref{ejemplo1} cont'd). In the special case of $R[X_1]/\langle X_1^2+1\rangle$, we have $g_1(X_1)=X_1-1$, $G_1(X_1)=t_1(X_1)$ and $\widetilde{G_1}=1$. Moreover, since $(X_1-1)^2\equiv 2X_1\emph{ mod }t_1(X_1)$ and $X_1$ is a unit in $\mathcal R$, the ideal $\langle (X_1-1)^2\rangle$ is equal to $\langle 2\rangle$. So, $R[X_1]/\langle X_1^2+1\rangle$ is a finite chain ring with ideals (i.e.{negacyclic} codes)

$$R[X_1]/\langle X_1^2+1\rangle \supsetneq  <X_1-1>\supsetneq  <2> \supsetneq  <2(X_1-1)> \supsetneq  0.$$
\end{example}

In the following result we adjoin the semisimple part to the ring $R$ in order to describe the original PIPQR with the help of Proposition \ref{ideales}.

\begin{theorem}\label{teoremon}
Let  $\mathcal R=R[X_1,\dots,X_r]/\left\langle t_1(X_1),\dots,t_r(X_r)\right
    \rangle$ be a PIPQR such that $\overline{t_1}(X_1)\in \mathbb{F}_q[X_1]$ has repeated-roots (in particular, $R$ is a Galois ring $GR(p^n,l)$). Then $\mathcal{R}$ is a direct sum of finite chain rings $R_{C,j}[X_1]/\left\langle \gamma_j(X_1),p^{n-1}X_1^{k_j}\right\rangle$, where $R_{C,j}$ is a Galois extension of the coefficient ring $R$ and $\gamma_j(X_1)$ is an Eisenstein polynomial of degree $k_j$.
\end{theorem}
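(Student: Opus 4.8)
The plan is to mimic the group-algebra technique of \cite{PIGAS} alluded to in the text: separate the "semisimple variables" $X_2,\dots,X_r$ from the single "repeated-root variable" $X_1$, absorb the former into an enlarged coefficient ring, and then apply the univariate analysis already carried out in Proposition \ref{univariado} and Remark \ref{ideales} over that enlarged ring.

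First I would write $\mathcal R = S[X_1]/\langle t_1(X_1)\rangle$, where $S = R[X_2,\dots,X_r]/\langle t_2(X_2),\dots,t_r(X_r)\rangle$. By Theorem \ref{PIR}, each $\overline{t_i}(X_i)$ for $i\geq 2$ is square-free, so $S$ is a semisimple ambient ring over the Galois ring $R=GR(p^n,l)$; by the semisimple theory of \cite{nuestro1} (the decomposition of a semisimple multivariable ambient space over a finite chain ring), $S$ decomposes as a direct sum $S \cong \bigoplus_{m} S_m$ of Galois rings $S_m = GR(p^n, l d_m)$ for suitable extension degrees $d_m$. Since direct sums commute with the polynomial-ring and quotient constructions in $X_1$, this gives
\begin{equation}
\mathcal R \;\cong\; \bigoplus_{m} S_m[X_1]/\langle t_1(X_1)\rangle .
\end{equation}

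Next, for each fixed $m$, I would apply Proposition \ref{univariado} and Remark \ref{ideales} with the base Galois ring $R$ replaced by $S_m = GR(p^n, ld_m)$. The factorization $\overline{t_1}(X_1)=\prod_{j=1}^s \overline{g_j}(X_1)^{k_j}$ into coprime irreducibles over $\mathbb F_q$ need not remain irreducible over the residue field $\mathbb F_{q^{d_m}}$ of $S_m$, but it is still square-free-after-taking-radical in the same pattern of exponents $k_j$; applying Hensel and the primary decomposition over $S_m$ and then refining each $\overline{g_j}$ into its irreducible factors over $\mathbb F_{q^{d_m}}$, Proposition \ref{univariado} yields $S_m[X_1]/\langle t_1(X_1)\rangle \cong \bigoplus_{j'} \mathcal R_{m,j'}$, a direct sum of finite chain rings, and Remark \ref{ideales} identifies each summand as $R_{m,j'}[X_1]/\langle \gamma_{m,j'}(X_1),\, p^{n-1}X_1^{k_{j'}}\rangle$ with $R_{m,j'}$ a Galois extension $GR(p^n, ld_m r_{j'})$ of $R$ and $\gamma_{m,j'}$ an Eisenstein polynomial of degree $k_{j'}$ of the shape $X_1^{k_{j'}}+pf_{m,j'}(X_1)$. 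Re-indexing the pairs $(m,j')$ by a single index $j$, writing $R_{C,j}:=R_{m,j'}$ and $\gamma_j:=\gamma_{m,j'}$, and collecting the summands over all $m$, I obtain the claimed decomposition $\mathcal R \cong \bigoplus_j R_{C,j}[X_1]/\langle \gamma_j(X_1),\, p^{n-1}X_1^{k_j}\rangle$ with each $R_{C,j}$ a Galois extension of $R$ and each $\gamma_j$ Eisenstein of degree $k_j$.

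The routine part is checking that the ring isomorphisms commute with extension of scalars from $R$ to $S_m$ — that the idempotent decompositions and Hensel liftings behave functorially — which is straightforward since $S_m$ is a flat (indeed free) $R$-algebra and the relevant polynomials are monic. The main obstacle, and the point to argue carefully, is the behaviour of the PIR condition of Theorem \ref{PIR} under the base change $R \to S_m$: one must verify that $S_m[X_1]/\langle t_1(X_1)\rangle$ is still a principal ideal ring, i.e.\ that the coprimality of $\overline h(X_1)$ and $\overline u(X_1)$ over $\mathbb F_q$ is preserved when the residue field is enlarged to $\mathbb F_{q^{d_m}}$. This is in fact automatic, since coprimality of two polynomials over a field is detected by the resultant (or by the extended Euclidean algorithm) and is unaffected by field extension; hence each $S_m[X_1]/\langle t_1(X_1)\rangle$ satisfies the hypotheses of Proposition \ref{univariado}, and the argument goes through.
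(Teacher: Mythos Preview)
Your proposal is correct and follows essentially the same route as the paper: isolate the semisimple variables $X_2,\dots,X_r$, decompose $S=R[X_2,\dots,X_r]/\langle t_2,\dots,t_r\rangle$ as a direct sum of Galois extensions $Q_C$ of $R$ via the semisimple theory of \cite{nuestro1}, pass to $\bigoplus_C Q_C[X_1]/\langle t_1(X_1)\rangle$, and then invoke Proposition~\ref{univariado} (and Remark~\ref{ideales}) on each summand. Your additional remarks on the refinement of the factorization of $\overline{g_j}$ over $\mathbb F_{q^{d_m}}$ and on the stability of the coprimality condition under field extension make explicit a point the paper's proof leaves implicit, but the overall strategy is the same.
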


\begin{proof}
The ring $\mathcal R$ is isomorphic to the tensor product 
$$R[X_1]/\langle t_1(X_1)\rangle \otimes \left( R[X_2,\ldots,X_r]/\langle t_2(X_2),\ldots,t_r(X_r)\rangle\right).$$
Since $t_2(X_2),\cdots,t_r(X_r)$ have simple roots only, from \cite[Theorem 3.9]{nuestro1}, there exists an isomorphism
$$\varphi: \bigoplus_{C\in\mathcal C} Q_C\longrightarrow R[X_2,\ldots,X_r]/\langle t_2(X_2),\ldots,t_r(X_r)\rangle$$
where 
\begin{equation}\label{eq:C}
{\mathcal C}=\{\{(\mu_2^{q^s},\dots,\mu_r^{q^s})\ |\ s\in\NN\}\ |\ \overline{t_i}(\mu_i)=0,\ i=2,\dots,r\}
\end{equation}
is the set of all {cyclotomic classes} of the roots of $\overline{t_2}(X_2),\cdots,\overline{t_r}(X_r)$. Each $Q_C=GR(p^n,l\cdot|C|)$ is a Galois extension of $R$ contained in $R[X_2,\ldots,X_r]/\langle t_2(X_2),\ldots,t_r(X_r)\rangle$. Then, since $t_1(X_1)\in R[X_1]$,   $\varphi$ induces an isomorphism 

$$\widehat{\varphi}:\bigoplus_{C\in\mathcal C} Q_C[X_1]/\langle t_1(X_1)\rangle\longrightarrow\mathcal R.$$
As a consequence of Proposition \ref{univariado}, $\mathcal R$ can be written as a direct sum of finite chain rings  $R_{C,j}[X_1]/\left\langle \gamma_j(X_1)\ ,\ p^{n-1}X_1^{k_j}\right\rangle$, where $R_{C,j}$ is a Galois extension of $Q_C$ and so, of $R$.
\end{proof}

Now, we can generalize Corollary \ref{cor:des-ideales} to the multivariable case.

\begin{corollary}\label{cor:des-mult}  If
$\mathcal R=R[X_1,\ldots,X_r]/\left\langle t_1(X_1),\ldots,t_r(X_r)\right\rangle$ is a PIPQR such that $\overline{t_1}(X_1)\in \mathbb{F}_q[X_1]$ has repeated-roots (in particular, $R$ is a Galois ring $GR(p^n,l)$). Let us suppose that for each $C\in\mathcal C$ (see equation (\ref{eq:C})),  $$t_1(X_1)=\prod_{j=1}^s\prod_{m=1}^{s_{j,C}}G_{j,m}^C(X_1)$$
is the decomposition of $t_1(X_1)$ as the product of primary coprime polynomials in $Q_C[X_1]$ (i.e. $\overline{G_{j,m}^C}(X_1)=\overline{g_{j,m}^C}(X_1)^{k_j}$, where $\overline{g_{j,m}^C}(X_1)\in\mathbb F_{q^{|C|}}[X_1]$ is irreducible).
Then any code $\mathcal K\vartriangleleft
\mathcal R$ is {a sum of ideals} of the form
$$
\widehat{\varphi}\left(
%\prod_{C\in\mathcal C}
\left\langle p^{i_{j,m}}{g_{j,m}^C(X_1)^{c_{j,m}}\widetilde{G_{j,m}^C}}(X_1)\right\rangle\right),$$ 
where $(i_{j,m},c_{j,m})=(n,0)$ or
$0\le i_{j,m}\le n-1\ ,\
0\le c_{j,m}\le k_j-1\ ,  \ 1\leq m\leq s_{j,C}\ ,\
1\le j\le s\ ,\ C\in \mathcal C.
$
 Hence, there exists a family of
 polynomials  $H_1,\dots,H_{n}\in R[X_1,\ldots,X_r]$
 such that
 \begin{equation}
\mathcal K=\left\langle  H_1,p H_2,\dots,p^{n-1}
 H_{n}\right\rangle=\left\langle
 \sum_{i=0}^{n-1}p^iH_{i+1}\right\rangle
 \end{equation}

Moreover, $$|\mathcal K|=|\bar R|^{\sum_{C\in \mathcal C}\sum_{j=1}^{s}\sum_{m=1}^{s_{j,C}}\deg{g_{j,m}^C}(nk_j-c_{j,m}-i_{j,m}k_j)} $$
%$$|\mathcal K|=|\bar R|^{\sum_{C\in \mathcal C}\sum_{j=1}^{s}\left(r_jnk_j-\sum_{m=1}^{s_{j,C}}\deg{g_{j,m}^C}(c_{j,m}+i_{j,m})\right)} $$
and there exist $\prod_{C\in \mathcal C}\prod_{j=1}^s(nk_j+1)^{s_{j,C}}$ repeated-root codes in $\mathcal R$.
\end{corollary}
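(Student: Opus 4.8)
The plan is to deduce the multivariable description from the univariable one, Corollary~\ref{cor:des-ideales}, via the direct-sum decomposition of Theorem~\ref{teoremon}. First I would take the ring isomorphism
\[
\widehat{\varphi}\colon \bigoplus_{C\in\mathcal C} Q_C[X_1]/\langle t_1(X_1)\rangle \;\longrightarrow\; \mathcal R
\]
produced in the proof of Theorem~\ref{teoremon}, where $Q_C=GR(p^n,l\cdot|C|)$ and $\mathcal C$ is the set of cyclotomic classes of (\ref{eq:C}). Being a ring isomorphism, $\widehat\varphi$ restricts to a cardinality-preserving bijection between the ideals of the two sides that commutes with forming sums of ideals. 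Since, moreover, an ideal of a finite direct sum of (unital) rings is exactly the external direct sum of its projections onto the summands, describing the codes $\mathcal K\vartriangleleft\mathcal R$ amounts to describing, for each fixed $C$, the ideals $\mathcal K_C\vartriangleleft Q_C[X_1]/\langle t_1(X_1)\rangle$, and then transporting $\bigoplus_C\mathcal K_C$ through $\widehat\varphi$.

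Next I would verify that Corollary~\ref{cor:des-ideales} applies to $Q_C[X_1]/\langle t_1(X_1)\rangle$ with the coefficient ring $R$ replaced by the Galois ring $Q_C=GR(p^n,l|C|)$ (hence $q$ by $q^{|C|}$). This ring is a direct summand of the PIR $\mathcal R$, so it is itself a PIR; in fact it was already written as a direct sum of finite chain rings in the proof of Theorem~\ref{teoremon} through Proposition~\ref{univariado}. Also $\overline{t_1}(X_1)$, read in $\mathbb{F}_{q^{|C|}}[X_1]$, still has repeated roots, because the multiplicity $k_j$ of each irreducible factor $\overline{g_j}$ is unaffected by a field extension; only the factor $\overline{g_j}$ itself may split, into $s_{j,C}$ irreducibles $\overline{g_{j,m}^C}$ of smaller degree, which is precisely the primary decomposition $t_1(X_1)=\prod_{j=1}^s\prod_{m=1}^{s_{j,C}}G_{j,m}^C(X_1)$ posited in the statement. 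Corollary~\ref{cor:des-ideales} then gives: every $\mathcal K_C$ is a sum of the elementary ideals $\langle p^{i_{j,m}}g_{j,m}^C(X_1)^{c_{j,m}}\widetilde{G_{j,m}^C}(X_1)\rangle$ under the stated range of $(i_{j,m},c_{j,m})$ (note $0\le c_{j,m}\le k_j-1$, the multiplicity being $k_j$ for every $m$); there are $\prod_{j=1}^s(nk_j+1)^{s_{j,C}}$ such ideals; and each of them has cardinality a power of $|\overline{Q_C}|=q^{|C|}$ with exponent $\sum_{j,m}\deg\overline{g_{j,m}^C}\,(nk_j-c_{j,m}-i_{j,m}k_j)$.

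Assembling the pieces is then bookkeeping. Applying $\widehat\varphi$ to a direct sum over $C$ of sums of elementary ideals yields the asserted shape $\mathcal K=\sum_{C,j,m}\widehat\varphi\big(\langle p^{i_{j,m}}g_{j,m}^C(X_1)^{c_{j,m}}\widetilde{G_{j,m}^C}(X_1)\rangle\big)$; the number of codes multiplies over the summands, giving $\prod_{C}\prod_{j}(nk_j+1)^{s_{j,C}}$; and $|\mathcal K|=\prod_C|\mathcal K_C|$, so, feeding in the univariable cardinalities over the $Q_C$ and recalling $|\overline{Q_C}|=|\overline R|^{|C|}$, the exponent becomes $\sum_{C}\sum_{j}\sum_{m}|C|\deg\overline{g_{j,m}^C}\,(nk_j-c_{j,m}-i_{j,m}k_j)$, which is the stated formula once $\deg g_{j,m}^C$ is understood as the degree over $\overline R$ of the residue field of the corresponding finite chain summand, namely $|C|\deg\overline{g_{j,m}^C}$. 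Finally, the existence of $H_1,\dots,H_n\in R[X_1,\dots,X_r]$ with $\mathcal K=\langle H_1,pH_2,\dots,p^{n-1}H_n\rangle=\langle\sum_{i=0}^{n-1}p^iH_{i+1}\rangle$ follows exactly as in the univariable case in Corollary~\ref{cor:des-ideales}: group the elementary generators according to their $p$-valuation $i$ and use that $\mathcal R$ is a PIR (Theorem~\ref{PIR}) to collapse the resulting $n$ generators into a single one of the prescribed form.

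The genuinely delicate part is not the transport through $\widehat\varphi$ nor the multiplicativity of sizes and counts over $C$, but the analysis at the level of each $Q_C$: checking that both the repeated-root hypothesis and the PIR property persist after extending scalars to $Q_C$, that the primary factors of $t_1$ over $Q_C$ keep multiplicity $k_j$ while their number $s_{j,C}$ and their degrees $\deg\overline{g_{j,m}^C}$ do change, and --- for the cardinality formula --- correctly accounting for the residue field $\mathbb{F}_{q^{|C|}}$ of $Q_C$ against the base field $\mathbb{F}_q=\overline R$, so that the factor $|C|$ enters exactly as in the statement.
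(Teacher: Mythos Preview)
Your argument is correct and follows precisely the route the paper intends: the corollary is stated without proof, immediately after Theorem~\ref{teoremon}, as the evident generalization of Corollary~\ref{cor:des-ideales} obtained by transporting the univariable description through $\widehat\varphi$ and summing over $C\in\mathcal C$. Your extra care in checking that the repeated-root and PIR hypotheses persist over each $Q_C$, and in tracking the factor $|C|$ in the cardinality exponent, goes beyond what the paper makes explicit but is exactly in its spirit.
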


The results on the Hamming distance of linear codes (and in particular of serial multivariable codes) over finite chain rings contained in \cite[Section 2.1]{Nechaev99}, \cite[Section 4]{SN} and \cite[Theorem 2]{nuestro2} can be applied in this context to multivariable codes in PIPQRs. However, because of the special simple description of these codes, we can use the same ideas of \cite[Section 3.3]{nuestro1} to compute their Hamming distance.
 
 \begin{proposition}\label{codigocociente}
    In the conditions of Corollary \ref{cor:des-mult}, $d(\mathcal K)=d(\mathcal L)$, where $\mathcal L$ is the code $\left\langle \overline{H_1},\dots,\overline{H_t}\right\rangle$ in $\mathbb F_q[X_1,\dots,X_r]/\left\langle \overline t_1(X_1),\dots,\overline t_r(X_r)\right\rangle$.
\end{proposition}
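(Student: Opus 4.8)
The plan is to reduce the computation of the Hamming distance of a modular code $\mathcal K$ in the PIPQR $\mathcal R$ to that of its projection $\mathcal L$ onto the semisimple quotient ring $\mathbb F_q[X_1,\dots,X_r]/\langle \overline{t_1}(X_1),\dots,\overline{t_r}(X_r)\rangle$, exactly along the lines of the analogous result for serial multivariable codes in \cite[Section 3.3]{nuestro1}. Recall from Corollary \ref{cor:des-mult} that $\mathcal K=\langle H_1,pH_2,\dots,p^{n-1}H_n\rangle=\langle \sum_{i=0}^{n-1}p^iH_{i+1}\rangle$ for suitable polynomials $H_j\in R[X_1,\dots,X_r]$, and $\mathcal L=\langle\overline{H_1},\dots,\overline{H_n}\rangle$. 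The Hamming weight here is the number of nonzero coefficients of a (reduced) representative, so it is natural to work with the set-theoretic section given by writing elements of $\mathcal R$ (resp.\ of the residue ring) by their canonical reduced representatives modulo the defining polynomials, and to note that reduction mod $p$ and mod $\langle t_i\rangle$ commute since the $t_i$ are monic.

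First I would prove the inequality $d(\mathcal K)\le d(\mathcal L)$. Take a minimum-weight nonzero codeword $\overline c\in\mathcal L$; it can be written $\overline c=\sum_{j=1}^n \overline{a_j}\,\overline{H_j}$ for some $a_j\in R[X_1,\dots,X_r]$, and lifting, the element $c'=p^{n-1}\sum_{j=1}^n a_j H_j\in\mathcal K$ (multiplying by the top power of $p$ to kill all cross terms) satisfies: its reduced representative modulo the $t_i$ is $p^{n-1}$ times a lift of $\overline c$, hence nonzero, with support contained in the support of (a lift of) $\overline c$. Concretely, $c'$ reduced mod $\langle t_1,\dots,t_r\rangle$ equals $p^{n-1}\widetilde c$ where $\widetilde c$ reduces mod $p$ to $\overline c$; since $p^{n-1}\cdot(\text{nonzero elt of }\overline R)$ is a nonzero element of $R$ (as $p^{n-1}$ generates the socle $M^{n-1}$ of $R=GR(p^n,l)$), each coefficient of $c'$ is nonzero exactly where the corresponding coefficient of $\overline c$ is nonzero. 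Hence $\mathrm{wt}(c')\le\mathrm{wt}(\overline c)=d(\mathcal L)$, giving $d(\mathcal K)\le d(\mathcal L)$.

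Next, the reverse inequality $d(\mathcal L)\le d(\mathcal K)$. Take a minimum-weight nonzero $c\in\mathcal K$, with reduced representative $c=\sum_{i=0}^{n-1}p^i c_i$ where each $c_i$ has coefficients in a fixed set of Teichm\"uller/coset representatives of $\overline R$ in $R$. Let $i_0$ be the least index with $c_{i_0}\ne 0$; then $\overline{c_{i_0}}\ne 0$ in the residue ring, and I claim $\overline{c_{i_0}}\in\mathcal L$. This is the crux: one must check that projecting $\mathcal K$ modulo $M^{i_0+1}$ (equivalently, reading off the lowest $p$-adic layer) lands in $\mathcal L$. Using the generator description $\mathcal K=\langle H_1,pH_2,\dots,p^{n-1}H_n\rangle$, any $c\in\mathcal K$ has the form $c=\sum_j p^{j-1}b_jH_j$; reducing mod $p$ shows the lowest layer $\overline{c_{i_0}}$ is a combination $\sum_{j:\,j-1\le i_0}\overline{b_j}\,\overline{H_j}$ (after accounting for carries, which only add terms of the form $\overline{(\,\cdot\,)}\,\overline{H_j}$ to higher layers), so indeed $\overline{c_{i_0}}\in\langle\overline{H_1},\dots,\overline{H_n}\rangle=\mathcal L$. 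Since the support of $\overline{c_{i_0}}$ is contained in the support of $c$ (a coefficient of $c$ vanishes only if all its layers vanish), $d(\mathcal L)\le\mathrm{wt}(\overline{c_{i_0}})\le\mathrm{wt}(c)=d(\mathcal K)$.

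The main obstacle I anticipate is the bookkeeping in the second inequality: making precise that the lowest nonzero $p$-adic layer $\overline{c_{i_0}}$ of an arbitrary element $c=\sum_j p^{j-1}b_jH_j\in\mathcal K$ genuinely lies in $\mathcal L=\langle\overline{H_1},\dots,\overline{H_n}\rangle$ rather than in some larger module, because additive carries when collecting powers of $p$ can a priori contaminate the layers. The clean way around this is to observe that the natural surjection $R[X_1,\dots,X_r]/\langle t_1,\dots,t_r\rangle\twoheadrightarrow \mathbb F_q[X_1,\dots,X_r]/\langle\overline{t_1},\dots,\overline{t_r}\rangle$ is a ring homomorphism sending $\mathcal K$ onto $\mathcal L$ and sending $M^{i_0}\mathcal K$ into the ideal generated by the images of the $H_j$; chasing $c$ through $M^{i_0}$-adic layers then gives the claim without explicit carry analysis. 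Everything else — that supports only shrink under $\bmod\,p$ and under multiplication by $p^{n-1}$, and that the $t_i$ being monic makes reduced representatives behave well under these operations — is routine and mirrors \cite[Section 3.3]{nuestro1} verbatim.
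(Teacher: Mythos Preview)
Your approach is exactly the one the paper intends: the paper gives no proof of this proposition at all and simply points to \cite[Section 3.3]{nuestro1}, which is precisely the reference you invoke and whose two-inequality strategy you reproduce.

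One technical point in your sketch deserves correction. In your ``clean way around'' the carry problem you assert that the reduction map $\mathcal R\twoheadrightarrow\overline{\mathcal R}$ sends $\mathcal K$ onto $\mathcal L$. It does not: the image of $\mathcal K=\langle H_1,pH_2,\dots,p^{n-1}H_n\rangle$ under reduction mod $p$ is only $\langle\overline{H_1}\rangle$, which in general is strictly smaller than $\mathcal L=\langle\overline{H_1},\dots,\overline{H_n}\rangle$. The argument that actually works is the one you set up but did not finish: replace $c$ by $p^{\,n-1-i_0}c\in\mathcal K\cap p^{n-1}\mathcal R$ and use the socle isomorphism $p^{n-1}\mathcal R\cong\overline{\mathcal R}$. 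With the $H_i$ coming from the decomposition in Corollary~\ref{cor:des-mult} (so that in each chain-ring summand the generators are nested), one checks component-by-component that $\mathcal K\cap p^{n-1}\mathcal R$ corresponds exactly to $\mathcal L$ under this isomorphism, and the support inclusion then gives $d(\mathcal L)\le d(\mathcal K)$. Your first inequality $d(\mathcal K)\le d(\mathcal L)$ is fine as written.
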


Hence, the results on the Hamming distance of the codes $\mathcal L$ (i.e, on multivariable codes with repeated-roots over a finite field) which can be found in \cite{Poli} can be lifted to our codes.

\begin{example}\label{Ex3} (Example \ref{ejemplo1} cont'd). The factorization of $X_2^7-1=(X_2-1)(X_2^3+2X_2^2+X_2+3)(X_2^3+3X_2^2+2X_2+3)$ into basic irreducible polynomials over $\mathbb Z/4\mathbb{Z}$ provides the following decomposition of $\mathcal R$ (see Theorem \ref{teoremon})
$$\mathcal R\cong R[X_1]/\langle X_1^2+1\rangle\oplus GR(4,3)[X_1]/\langle X_1^2+1\rangle\oplus GR(4,3)[X_1]/\langle X_1^2+1\rangle.$$
Each summand is a finite chain ring (cf. Example \ref{ejemplo2}), and so any {negacyclic} code can be written as the direct sum of three ideals.

Let us consider the code $\mathcal K=\langle (X_1-1)(X_2^4+X_2^3-3X_2^2+2X_2+3)\rangle$. The polynomial $X_2^4+X_2^3-3X_2^2+2X_2+3$ is, up to units, an orthogonal idempotent of $\mathcal R$. Namely, it generates the third summand of the previous decomposition. Since $(X_1-1)^2\equiv 2X_1\mbox{ mod }t_1(X_1)$ and $X_1$ is a unit of $\mathcal R$, we deduce that $\mathcal K=\langle H_1, 2H_2\rangle$ with $H_1=(X_1-1)(X_2^4+X_2^3-3X_2^2+2X_2+3)$ and $H_2=X_2^4+X_2^3-3X_2^2+2X_2+3$ (see Corollary \ref{cor:des-mult}). Thus, the code $\langle  \overline{H_1},\overline{H_2}\rangle$ contains a codeword with Hamming weight 4 and, by Proposition \ref{codigocociente}, the Hamming distance of $\mathcal K$ is at most 4. Direct computations with Sage \cite{sage} show that this is the actual minimum distance of the code. On the other hand, observe that the Hamming distance of the code $\overline {\mathcal K}=\langle (X_1+1)(X_2^4+X_2^3+X_2^2+1)\rangle\lhd \mathbb{Z}/2\mathbb{Z}[X_1,X_2]/\left\langle (X_1+1)^2,X_2^7+1\right\rangle$ is 8 (to see this, apply the isometry of \cite[Proposition 45]{Poli} and check \cite[Table 1]{Castagnoli}).

\end{example}

\begin{example}
As a variation of the previous example, let us take the same ambient space $\mathcal R$ and construct the code $\mathcal L$ generated by the codeword $(X_1-1)(X_2^3+2X_2^2+X_2-1)$. This code can be seen as a product code of the negacyclic code generated by $X_1-1$ and the (punctured) $\mathbb{Z}_4-$base linear code of the Kerdock code of length 16.
\end{example}

\begin{example} {(Generalized Kerdock code) Let $R=\mathbb{Z}/4\mathbb{Z}$ and $S=GR(2^2,m)$, with $m$ prime, a Galois extension of $R$. Let $U=1+2R=\langle \eta_1\rangle$ be the group of units of $R$. Let $\lambda\in S$ be a generator of $\Gamma^*(S)$, with $\Gamma(S)$ the Teichm\"uller coordinate set of $S$, and let $\mbox{Tr}:S\rightarrow R$ be the trace function of $S$ onto $R$. According to \cite{Nechaev_95},  
$$\mathcal{K}=\left\{\sum_{i_1=0}^{2^m-1}\sum_{i_2=0}^1((\mbox{Tr}\,(\xi\lambda^{i_1})+\beta)\eta^{i_2})X_1^{i_1}X_2^{i_2}\, |\, \xi\in S,\beta\in R\right\}$$
is an ideal of the ambient space $R[X_1,X_2]/\langle X_1^{2^m-1}-1,X_2^2-1\rangle$ known as generalized Kerdock code. We can regard the ambient space as a direct sum of rings of the form $T[X_2]/\langle X_2^2-1\rangle$, with $T$ a suitable Galois extension of $R$. Such rings are not principal ideal rings, since their nilradical, $\langle 2,X_2+1\rangle$, is not principal (see \cite[Proposition 4.4]{Sergio-Steve}). Thus, the generalized Kerdock code is not a PIPQR. Also, notice that $X_2^2-1=(X_2-1)^2-2(X_2-1)$ and so, statement 1 of Theorem \ref{PIR} is not satisfied.}

\end{example}

\section{Additive modular codes over $\mathbb{F}_4$ from PIPQRs}

Additive modular codes over $\mathbb F_4$ {can be} used {to construct} quantum error correcting codes, as shown in \cite{nuestro5}. However, except in {very special cases (for instance, the} univariable cyclic modular codes annalyzed in \cite{Huffman2}, the description of such codes seems quite difficult. {As an application of the study of the multivariable codes of the previous section, we obtain a complete description of additive modular codes in PIPQRs}.

In the framework presented in \cite{nuestro5}, additive modular codes can be seen as $\mathcal A_2$-additive submodules of the algebra $\mathcal A_4$, where  $\mathcal A_q=\mathbb F_q[X_1,\ldots,X_r]/\langle t_1(X_1),\ldots,t_r(X_r)\rangle$ and $t_1(X_1),\ldots,t_r(X_r)$ have coefficients in $\mathbb{F}_2$. Since the finite field $\mathbb F_q$, with $q=2,4$, is the Galois ring {$GR(2,r)$} with $r=1,2$ respectively, the algebra $\mathcal A_q$ is a PIPQR if and only if the polynomials $t_2(X_2),\ldots,t_r(X_r)$ are square-free. In such {a} case, from Corollary \ref{cor:des-mult}, the algebra $\mathcal A_q$ can be decomposed into a direct sum of ideals (see also \cite[Theorem 1]{nuestro5}). Let us recall some definitions and results from \cite{nuestro3}  in order to describe such a decomposition. 

{The set $\mathcal C_2$ of $2-$classes of the roots of ${t_2}(X_2),\cdots,{t_r}(X_r)$ (take $q=2$ in equation (\ref{eq:C})) is a disjoint union of two subsets according to their relation to the set $\mathcal C_4$ of $4-$classes (take $q=4$ in the same equation). The first subset, $\mathcal C_2^o$, contains the classes $C_2(\mu)$ such that $C_2(\mu)\in \mathcal C_4$, i.e., classes with odd cardinality. The second subset, $\mathcal C_2^e$, contains the classes that split in $\mathcal C_4$, i.e, those $C_2(\mu)$ with even cardinality such that $C_2(\mu)=C_4(\mu)\cup C_4(\mu^2)$}.

\begin{theorem}\label{Aq}
Let $\mathcal A_q=\mathbb F_q[X_1,\ldots,X_r]/\langle t_1(X_1),\ldots,t_r(X_r)\rangle$, $q=2,4$, be a PIPQR such that $t_i(X_i)\in \mathbb{F}_2[X_i]$ for all $i=1,\dots,r$, and such that {only} ${t_1(X_1)}$ has repeated-roots. Let us suppose that for each $C\in\mathcal C_2$ the polynomial $t_1(X_1)$ factorizes as the product of primary coprime polynomials  in {$\mathbb{F}_{2^{|C|}}[X_1]$} as (cf. Corollary \ref{cor:des-mult})  $$t_1(X_1)=\prod_{j=1}^s\left(\prod_{m=1}^{s_{j,C}}g_{j,m}^C(X_1)\right)^{k_j}$$
%is the decomposition of $t_1(X_1)$  (i.e. $\overline{G_{j,m}^C}(X_1)=\overline{g_{j,m}^C}(X_1)^{k_j}$, 
Then: 
\begin{enumerate}
\item $\mathcal A_2$ is a direct sum of finite chain rings $\mathcal K^C_{j,m}\cong \mathbb{F}_{2^{|C|\deg g_{j,m}^C}}[Z]/\langle Z^{k_{j}}\rangle$. 
%where 
%$$\langle h_{j,m}^C(X_1)\rangle\cong \mathbb F_q(\mu_{j,m},\mu_C)[X_1]/\langle g_{j,m}^C(X_1)^{k_{j}}\rangle\cong\mathbb F_q(\mu_{j,m},\mu_C)[Z_1]/\langle Z_1^{k_{j}}\rangle,$$
%where $C\in \mathcal C_2$.% and $1\leq m\leq s_{j,C}$, $1\leq j\leq s$.
\item $\mathcal A_4$ is a direct sum of ideals $\mathcal I^C_{j,m}$ which are free $\mathcal K^C_{j,m}-$modules of rank 2. %($C\in \mathcal C_2$).
%where 
%$$\langle h_{j,m}^C(X_1)\rangle\cong \mathbb F_q(\mu_{j,m},\mu_C)[X_1]/\langle g_{j,m}^C(X_1)^{k_{j}}\rangle\cong\mathbb F_q(\mu_{j,m},\mu_C)[Z_1]/\langle Z_1^{k_{j}}\rangle,$$
% and $1\leq m\leq s_{j,C}$, $1\leq j\leq s$.
\item Any additive modular code $\mathcal D$ is direct sum of subcodes $\mathcal D_{j,m}^C$ which are $\mathcal K_{j,m}^C-$submodules of $\mathcal I_{j,m}^C$.
\end{enumerate}
\end{theorem}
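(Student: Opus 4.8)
The plan is to deduce part (1) directly from Corollary \ref{cor:des-mult} applied over the prime field $\mathbb{F}_2=GR(2,1)$, and then to obtain parts (2) and (3) by the scalar extension $\mathcal A_4\cong\mathcal A_2\otimes_{\mathbb{F}_2}\mathbb{F}_4$, which is available precisely because all the $t_i(X_i)$ have coefficients in $\mathbb{F}_2$. For part (1) I would invoke Corollary \ref{cor:des-mult} (equivalently Theorem \ref{teoremon}) with $R=\mathbb{F}_2$, so that $n=1$ and the coefficient ring $Q_C=GR(2,|C|)=\mathbb{F}_{2^{|C|}}$ of every block is a field. Over a field the primary coprime factorization of $t_1(X_1)$ carries no $p$-adic tail, i.e.\ $G_{j,m}^C(X_1)=g_{j,m}^C(X_1)^{k_j}$, so the corollary yields the ring direct sum $\mathcal A_2\cong\bigoplus_{C\in\mathcal C_2}\bigoplus_{j=1}^{s}\bigoplus_{m=1}^{s_{j,C}}\mathbb{F}_{2^{|C|}}[X_1]/\langle g_{j,m}^C(X_1)^{k_j}\rangle$. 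Denote the local factor by $\mathcal K_{j,m}^C$: it is a finite chain ring of characteristic $2$ whose maximal ideal is generated by the class of $g_{j,m}^C(X_1)$ with nilpotency index $k_j$ (Proposition \ref{univariado} with $n=1$) and whose residue field is $\mathbb{F}_{2^{|C|}}[X_1]/\langle g_{j,m}^C(X_1)\rangle\cong\mathbb{F}_{2^{|C|\deg g_{j,m}^C}}$. By Hensel's lemma \cite{McD} a root of $g_{j,m}^C$ lifts into $\mathcal K_{j,m}^C$, so $\mathcal K_{j,m}^C$ is an algebra over its residue field with principal maximal ideal nilpotent of index $k_j$; comparing $\mathbb{F}_{2^{|C|\deg g_{j,m}^C}}$-dimensions then forces $\mathcal K_{j,m}^C\cong\mathbb{F}_{2^{|C|\deg g_{j,m}^C}}[Z]/\langle Z^{k_j}\rangle$.

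For part (2) I would base-change the decomposition of part (1) along $\mathcal A_4\cong\mathcal A_2\otimes_{\mathbb{F}_2}\mathbb{F}_4$. Since $-\otimes_{\mathbb{F}_2}\mathbb{F}_4$ commutes with finite direct sums of rings, one gets the ring direct sum $\mathcal A_4\cong\bigoplus_{C,j,m}\mathcal I_{j,m}^C$ with $\mathcal I_{j,m}^C:=\mathcal K_{j,m}^C\otimes_{\mathbb{F}_2}\mathbb{F}_4$. Each $\mathcal I_{j,m}^C$ is the ideal cut out by a primitive idempotent and, as $\mathbb{F}_4$ is free of rank $2$ over $\mathbb{F}_2$, it is a free $\mathcal K_{j,m}^C$-module of rank $2$. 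The dichotomy $\mathcal C_2=\mathcal C_2^o\sqcup\mathcal C_2^e$ is exactly what controls whether $\mathbb{F}_{2^{|C|\deg g_{j,m}^C}}\otimes_{\mathbb{F}_2}\mathbb{F}_4$ is a field or splits as a product of two copies, hence whether $\mathcal I_{j,m}^C$ is itself a chain ring; but only the freeness of rank $2$ is needed for the statement.

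For part (3), recall that a modular additive code $\mathcal D$ is an $\mathcal A_2$-submodule of $\mathcal A_4$. Let $e_{j,m}^C\in\mathcal A_2$ be the orthogonal primitive idempotents of part (1), so $e_{j,m}^C\mathcal A_2=\mathcal K_{j,m}^C$ and $\sum e_{j,m}^C=1$; under $\mathcal A_2\hookrightarrow\mathcal A_4$ these are exactly the idempotents of part (2), with $e_{j,m}^C\mathcal A_4=\mathcal I_{j,m}^C$. Because $\mathcal D$ is $\mathcal A_2$-stable, $\mathcal D=\bigoplus_{C,j,m}e_{j,m}^C\mathcal D$, and putting $\mathcal D_{j,m}^C:=e_{j,m}^C\mathcal D=\mathcal D\cap\mathcal I_{j,m}^C$ one observes that every $a\in\mathcal A_2$ acts on $\mathcal I_{j,m}^C$ through its component $e_{j,m}^Ca\in\mathcal K_{j,m}^C$, so $\mathcal D_{j,m}^C$ is stable under $\mathcal K_{j,m}^C$, i.e.\ it is a $\mathcal K_{j,m}^C$-submodule of $\mathcal I_{j,m}^C$.

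The main obstacle I expect is the structural identification in part (1) — recognising each local factor $\mathbb{F}_{2^{|C|}}[X_1]/\langle g_{j,m}^C(X_1)^{k_j}\rangle$ as the truncated polynomial ring over its residue field — together with the bookkeeping that makes the single index set $\{(C,j,m):C\in\mathcal C_2\}$ simultaneously describe the $\mathbb{F}_2$- and the $\mathbb{F}_4$-decompositions. This last point is the reason the argument should reach $\mathcal A_4$ by scalar extension from $\mathcal A_2$ rather than by rerunning Corollary \ref{cor:des-mult} over $\mathbb{F}_4$ with $4$-cyclotomic classes, which would produce a differently indexed (and, for $C\in\mathcal C_2^e$, finer) splitting.
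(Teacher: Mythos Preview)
Your argument is correct. Parts (1) and (3) essentially coincide with the paper's: for (1) the paper simply cites Proposition~\ref{univariado} and Theorem~\ref{teoremon}, and your Hensel-lifting step makes explicit the identification $\mathbb F_{2^{|C|}}[X_1]/\langle g^C_{j,m}(X_1)^{k_j}\rangle\cong\mathbb F_{2^{|C|\deg g^C_{j,m}}}[Z]/\langle Z^{k_j}\rangle$ that the paper leaves implicit; for (3) the paper defers to \cite[Theorem~2]{nuestro3}, which is precisely your idempotent-splitting argument.

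Part (2) is where you genuinely diverge. The paper reruns the decomposition over $\mathbb F_4$ and performs a three-way case analysis on how each local block behaves: (a) $C\in\mathcal C_2^o$ and $\deg g^C_{j,m}$ odd, so $g^C_{j,m}$ stays irreducible over $\mathbb F_{4^{|C|}}$ and $\mathcal I^C_{j,m}$ is a single chain ring; (b) $C\in\mathcal C_2^o$ and $\deg g^C_{j,m}$ even, so $g^C_{j,m}$ splits into two irreducibles and $\mathcal I^C_{j,m}$ is a product of two chain rings; (c) $C\in\mathcal C_2^e$, so the $2$-class itself splits into two $4$-classes. In each case rank-$2$ freeness is checked by hand. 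Your base-change argument $\mathcal I^C_{j,m}=\mathcal K^C_{j,m}\otimes_{\mathbb F_2}\mathbb F_4$ makes the rank-$2$ freeness immediate and uniform, which is cleaner for the theorem as stated. What the paper's case split buys in exchange is the finer \emph{ring} structure of each $\mathcal I^C_{j,m}$ (one chain ring versus two), which is used downstream in Corollary~\ref{conteocodigos} to count submodules. One small inaccuracy in your side remark: whether $\mathbb F_{2^{|C|\deg g^C_{j,m}}}\otimes_{\mathbb F_2}\mathbb F_4$ is a field or splits is governed by the parity of the product $|C|\cdot\deg g^C_{j,m}$, not by the dichotomy $\mathcal C_2^o\sqcup\mathcal C_2^e$ alone --- the paper's case (b) shows that an odd class can still give a split block --- but as you note this does not affect the claim.
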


\begin{proof}
\begin{enumerate}
\item This is a direct consequence of Proposition \ref{univariado} and Theorem \ref{teoremon}.
\item The proof depends on the cardinality of each class $C\in \mathcal C_2$, and the degree of the polynomial $g^C_{j,m}(X_1)$.
\begin{enumerate}
\item If $C\in \mathcal C^o_2$ and $g^C_{j,m}(X_1)$ has odd degree, then $C\in \mathcal C_4$ and $g^C_{j,m}(X_1)$ is also irreducible in $\mathbb{F}_{4^{|C|}}[X_1]$. Therefore, because of Proposition \ref{univariado}, there exists an ideal $\mathcal I^C_{j,m}\cong \mathbb{F}_{4^{|C|\deg g^C_{j,m}}}[Z]/\langle Z^{k_{j}}\rangle$ in $\mathcal A_4$, which is clearly a free $\mathcal K^C_{j,m}-$module of rank 2.
\item If $C\in \mathcal C^o_2$ and $g^C_{j,m}(X_1)$ has even degree, then $C\in \mathcal C_4$ and $g^C_{j,m}(X_1)$ splits as the product of two irreducible polynomials $g^C_{j,m,1}(X_1),g^C_{j,m,2}(X_1)$ of the same degree $\frac{1}{2}\deg g^C_{j,m}$ in $\mathbb{F}_{4^{|C|}}[X_1]$. Therefore, because of Proposition \ref{univariado}, there exists an ideal in $\mathcal A_4$
$$\mathcal I^C_{j,m}= \mathcal I^C_{j,m,1}\oplus \mathcal I^C_{j,m,2}\cong \left(\mathbb{F}_{4^{|C|\frac{1}{2}\deg g^C_{j,m}}}[Z]/\langle Z^{k_{j}}\rangle\right)^2$$
%\oplus \mathbb{F}_{4^{\frac{|C|}{2}\deg g^C_{j,m}}}[Z]/\langle Z^{k_{j}}\rangle$$ 
which can be seen as a free $\mathcal K^C_{j,m}-$module of rank 2 (cf. \cite[Proposition 3]{nuestro5}).
\item If $C\in \mathcal C^e_2$, then $C=D\cup E$, with $D,E\in \mathcal C_4$, and $|D|=|E|=\frac{|C|}{2}$. Hence, $g^C_{j,m}(X_1)$ is also irreducible in $\mathbb{F}_{4^{|D|}}[X_1]$, and so  there exists an ideal in $\mathcal A_4$
$$\mathcal I^C_{j,m}= \mathcal I^D_{j,m}\oplus \mathcal I^E_{j,m}\cong \left(\mathbb{F}_{4^{\frac{|C|}{2}\deg g^C_{j,m}}}[Z]/\langle Z^{k_{j}}\rangle\right)^2$$
which again is a free $\mathcal K^C_{j,m}-$module of rank 2.
\end{enumerate}
\item The proof is similar to \cite[Theorem 2]{nuestro3}.
\end{enumerate}
\end{proof}

{Let us illustrate this theorem with a concrete example. It provides, via \cite[Theorem 2]{calderbank}, a way to construct a quantum-error-correcting code with parameters $[[8,4,2]]$. This code has an optimal distance for its length and dimension according to \cite{Grassl}, and it can be fully described as an additive modular code in a PIPQR as we shall see now.}

\begin{example}
{Consider the binary polynomials $t_1(X_1)=(X_1+1)^2,t_2(X_2)=X_2^2+X_2+1$. Theorem \ref{Aq} give us an isomorphism $\mathcal A_2\cong \mathbb{F}_4[Z]/\langle Z^4\rangle $ (here $\mathbb{F}_4=\mathbb{F}_2[X_2],$ and $Z=X_1+1$), and a direct sum decomposition $\mathcal A_4\cong \mathbb{F}_4[Z]/\langle Z^4\rangle\oplus w\mathbb{F}_4[Z]/\langle Z^4\rangle$. The additive modular code $\mathcal D$ generated by the codeword $c=w+w^2 X_1+wX_1^2+w^2X_1^3$ can be seen as the submodule $\langle Z^2+wZ^3\rangle\le \mathcal A_4$. Magma computations \cite{sage} show that this code is self-orthogonal w.r.t. bilinear form considered in \cite[Section 4]{nuestro3}, and that the smallest weight of the codewords in $\mathcal D^\perp \setminus \mathcal D$ is 2. This provides the conditions to construct the $[[8,4,2]]$ quantum-error-correcting code mentioned above.}
\end{example}

{As an application of this theorem we will finally count the number of additive modular codes in PIPQRs.}

\begin{corollary}\label{conteocodigos}
Under the hypothesis of the previous theorem:
%Let $t_1(X)\in \mathbb{F}_2[X]$ and $t_2(Y)\in \mathbb{F}_2[Y]$ be monic polynomials such that $t_2$ is square-free. Then:
\begin{enumerate}
\item The total number of additive modular codes in $\mathcal A_4$ is
$$\prod_{C\in \mathcal C_2}\prod_{j=1}^s\prod_{m=1}^{s_{C,j}}\left(1+k_j+\left(\frac{2^{\delta^C_{j,m}}+1}{2^{\delta^C_{j,m}}-1}\right)\left(\frac{2^{k_j\delta^C_{j,m}}-1}{2^{\delta^C_{j,m}}-1}-k_j+2^{k_j\delta^C_{j,m}}-1\right)\right)$$
where $\delta^C_{j,m}=|C|\deg g^C_{j,m}(X_1)$.
\item Of these, only $$\prod_{C\in \mathcal C_2}\prod_{j=1}^s\prod_{m=1}^{s_{C,j}}\left(1+\left(\frac{2^{\delta^C_{j,m}}+1}{2^{\delta^C_{j,m}}-1}\right)(2^{k_j\delta^C_{j,m}}-1)\right)$$ codes can be generated by a single codeword.
\item Any of those codes can be generated at most by two codewords.
\end{enumerate}
\end{corollary}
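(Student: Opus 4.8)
The plan is to reduce, via the direct-sum decomposition of Theorem~\ref{Aq}(3), to counting the $\mathcal K$-submodules of a free rank-$2$ module over a single finite chain ring, and then to resolve that local problem by separating cyclic from non-cyclic submodules. Fix one factor and write $\mathcal K=\mathcal K_{j,m}^{C}\cong\mathbb F_{q_0}[Z]/\langle Z^{k}\rangle$ with $q_0=2^{\delta}$, $\delta=\delta_{j,m}^{C}$ and $k=k_j$, so that $\mathcal I=\mathcal I_{j,m}^{C}$ is free of rank $2$ over $\mathcal K$. By Theorem~\ref{Aq}(3), an additive modular code $\mathcal D$ is exactly a choice, independently in each factor, of a $\mathcal K$-submodule of $\mathcal I$; hence the number of codes is $\prod_{C,j,m}T(k_j,2^{\delta_{j,m}^{C}})$, where $T(k,q_0)$ is the number of submodules of $\mathcal K^{2}$. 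Moreover, using the orthogonal idempotents of $\mathcal A_2$ and $\mathcal A_4$ (Theorem~\ref{Aq}(1)--(2)), $\mathcal D$ is generated by a single codeword iff each of its components is cyclic, and statement~(3) is immediate because a submodule of the free rank-$2$ module $\mathcal I$ needs at most $2$ generators (its minimal number of generators equals the $\mathbb F_{q_0}$-dimension of its top, which is $\le 2$ by Smith normal form over the chain ring $\mathcal K$) and a generating pair for $\mathcal D$ is assembled from generating pairs in the factors, padding with zeros when one generator suffices.

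First I would count the cyclic submodules $\mathcal K v\le\mathcal K^{2}$. Writing the two coordinates of $v$ as units times powers of $Z$ yields $\mathrm{Ann}(v)=\langle Z^{k-\nu(v)}\rangle$, where $\nu(v)$ is the minimum of the two coordinate valuations; thus $\mathcal K v\cong\mathcal K/\langle Z^{a}\rangle$ with $a=k-\nu(v)$, and $a$ runs over $1,\dots,k$ as $v$ runs over the nonzero vectors. For fixed $a$ there are $q_0^{2(a-1)}(q_0^{2}-1)$ vectors $v$ with $\nu(v)=k-a$; the generators of a given such $\mathcal K v$ form the orbit $\mathcal K^{\times}v$, whose size is $|\mathcal K^{\times}|/|1+\mathrm{Ann}(v)|=q_0^{a-1}(q_0-1)$ since the stabilizer of $v$ in $\mathcal K^{\times}$ is $1+\mathrm{Ann}(v)$. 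Dividing, there are $q_0^{a-1}(q_0+1)$ cyclic submodules of type $a$, so the number of cyclic submodules of $\mathcal K^{2}$ (including the zero module) is $1+(q_0+1)\sum_{a=1}^{k}q_0^{a-1}=1+\tfrac{q_0+1}{q_0-1}(q_0^{k}-1)$. The product of these over all factors, with $q_0=2^{\delta_{j,m}^{C}}$ and $k=k_j$, is statement~(2).

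Next I would count the non-cyclic submodules by duality. Endow $\mathcal K^{2}$ with the standard (non-degenerate) bilinear form; since $\mathcal K$ is a finite chain, hence Frobenius, ring, $D\mapsto D^{\perp}$ is an inclusion-reversing involution on the submodule lattice. Reducing a submodule to diagonal form $\langle Z^{k-a}\rangle\oplus\langle Z^{k-b}\rangle$ by Smith normal form (and using that $GL_2(\mathcal K)$ does not change the type of $D^{\perp}$) shows that a submodule of type $(a,b)$, $k\ge a\ge b\ge 0$, has complement of type $(k-b,k-a)$; hence $D$ is non-cyclic iff $D^{\perp}$ has no invariant factor equal to $k$, i.e.\ iff $D^{\perp}\subseteq Z\mathcal K^{2}=(Z\mathcal K)^{2}$. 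Since $Z\mathcal K\cong\mathcal K/\langle Z^{k-1}\rangle$ is the chain ring of nilpotency index $k-1$ with the same residue field, the involution restricts to a bijection between the non-cyclic submodules of $\mathcal K^{2}$ and all submodules of $(\mathcal K/\langle Z^{k-1}\rangle)^{2}$, so
$$T(k,q_0)=T(k-1,q_0)+1+\tfrac{q_0+1}{q_0-1}\bigl(q_0^{k}-1\bigr),\qquad T(0,q_0)=1 .$$
Telescoping, with the identity $\sum_{i=1}^{k}q_0^{i}=\tfrac{q_0^{k}-1}{q_0-1}+q_0^{k}-1$, gives
$$T(k,q_0)=1+k+\frac{q_0+1}{q_0-1}\left(\frac{q_0^{k}-1}{q_0-1}-k+q_0^{k}-1\right),$$
and the product over all factors is statement~(1).

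The step I expect to be the main obstacle is this duality input: one must carefully justify, over the finite chain ring $\mathcal K$, that the standard bilinear form is non-degenerate and $D\mapsto D^{\perp}$ an involution, and that it interchanges the types $(a,b)$ and $(k-b,k-a)$ --- cleanest via Smith normal form --- after which the identification of the submodules of $(Z\mathcal K)^{2}$ with those of $(\mathcal K/\langle Z^{k-1}\rangle)^{2}$ and the recursion are routine. An alternative avoiding duality is to count the submodules of each fixed type $(a,b)$ directly (pick a maximal cyclic submodule, then a lift of a generator of the quotient, and correct for overcounting) and sum; this bookkeeping is heavier, so I would take the recursive route.
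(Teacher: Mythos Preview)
Your argument is correct. The paper's own proof is a single line: ``Combine Theorem~\ref{Aq} and \cite[Theorem 3]{nuestro5}'', i.e.\ it reduces via Theorem~\ref{Aq}(3) to the local problem of counting $\mathcal K$-submodules of a free rank-$2$ module over a finite chain ring and then quotes the closed formula from the authors' earlier paper. You take the same reduction but, instead of citing that external result, you derive the local count from scratch: first the cyclic submodules by an orbit count, then the non-cyclic ones by the Frobenius/self-injective duality $D\mapsto D^{\perp}$, which sets up the recursion $T(k,q_0)=T(k-1,q_0)+C(k,q_0)$ and telescopes to the stated formula. The duality step is sound once one notes that $\mathcal K\cong\mathbb F_{q_0}[Z]/\langle Z^{k}\rangle$ is quasi-Frobenius (so $D^{\perp\perp}=D$) and that a $GL_2(\mathcal K)$-change of basis preserves the Smith type of both $D$ and $D^{\perp}$; your identification of non-cyclic $D$ with $D^{\perp}\subseteq (Z\mathcal K)^{2}$ and of the latter with $(\mathcal K/\langle Z^{k-1}\rangle)^{2}$ is then routine. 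The trade-off is clear: the paper's proof is concise but opaque without access to \cite{nuestro5}, while your route is self-contained and exposes the mechanism behind the formula; both rest on the same decomposition (Theorem~\ref{Aq}), so the only genuine difference is whether the local submodule count is imported or proved in place.
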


\begin{proof}
Combine Theorem \ref{Aq} and \cite[Theorem 3]{nuestro5}.
\end{proof}

% \begin{acknowledgements}
% Edgar Mart\'inez-Moro was partially funded by Spanish MCINN under projects MTM2007-64704 and MTM2010-21580-C02-02. A. Pi\~nera-Nicol\'as and I. F. Rua were supported by MTM2010-18370-C04-01.
% \end{acknowledgements}
%\newpage

\bibliography{ring-semisimple}
\bibliographystyle{abbrv}

\end{document}